\DeclareMathAlphabet{\txcal}{U}{tx-cal}{m}{n}
\newcommand{\etalchar}[1]{$^{#1}$}
\newcommand{\SNP}{\text{Poly}}
\newcommand{\Poly}{\text{Poly}}
\newcommand{\Prb}{\mathbb P}
\newcommand{\sS}{\mathbb S}
\newtheorem{lemma}{Lemma}
\begin{document}
\title[SARS-Cov-2 variant composition in pooled samples]{A mixture
  model for determining SARS-Cov-2 variant composition in pooled
  samples} 
\author[R. Valieris, R. Drummond, A. Defelicibus]{Renan Valieris$^1$,
  Rodrigo Drummond$^1$,  Alexandre Defelicibus$^1$} 
\address{$^1$Laboratory of Computational Biology and Bioinformatics,
  CIPE/A.C. Camargo Cancer Center, São Paulo 01508-010, Brazil}
\email{valieris@accamargo.org.br}
\email{rdrummond@accamargo.org.br}
\email{alexandre.defelicibus@accamargo.org.br}
\author[E. Dias-Neto]{Emannuel Dias-Neto$^2$}
\address{$^2$Laboratory of Medical Genomics,  CIPE/A.C. Camargo Cancer
  Center, S\~ao Paulo 01508-010, Brazil}
\email{emmanuel@cipe.accamargo.org.br}
\author[R.A. Rosales]{Rafael A. Rosales$^3$}
\address{$^3$Departameto de Computa\c{c}\~ao e
  Matem\'atica, Universidade de S\~ao Paulo,
  Avenida Bandeirantes 3900, Ribeir\~ao Preto, S\~ao Paulo,
  14040-901, Brazil}
\email{rrosales@usp.br}
\author[I. T. Silva]{Israel Tojal da Silva$^1$}
\email{itojal@accamargo.org.br}

\date{\today}

\begin{abstract}
Despite of the fast development of highly effective vaccines to
control the current COVID$-$19 pandemic, the unequal distribution and
availability of these vaccines worldwide and the number of people
infected in the world lead to the continuous emergence of SARS-CoV-2
(Severe Acute Respiratory Syndrome coronavirus 2) variants of
concern. It is likely that real-time genomic surveillance will be
continuously needed as an unceasing monitoring tool, necessary to
follow the spillover of the disease spread and the evolution of the
virus. In this context, new genomic variants of SARS-CoV-2 that may
emerge as a response to selective pressure, including variants
refractory to current vaccines, makes genomic surveillance programs
tools of utmost importance. Here propose a statistical model for the
estimation of the relative frequencies of SARS-CoV-2 variants in
pooled samples. This model is built by considering a previously
defined selection of genomic polymorphisms that characterize
SARS-CoV-2 variants. The methods described here support both raw
sequencing reads for polymorphisms-based markers calling and
predefined markers in the VCF format. Results obtained by using
simulated data show that our method is quite effective in recovering
the correct variant proportions. Further, results obtained by
considering longitudinal data from wastewater samples of two locations
in Switzerland agree well with those describing the epidemiological
evolution of COVID-19 variants in clinical samples of these
locations. Our results show that the described method can be a
valuable tool for tracking the proportions of SARS-CoV-2 variants. 
\end{abstract}

\maketitle

\section{Introduction}
The astonishing speed seen for the global spread of COVID-19 has
prompted a large global effort to control this outbreak. The first
complete SARS-CoV-2 genome was published on January 05, 2020 by
\cite{pmid32015508}. Thenceforth, SARS-CoV-2 sequences recovered from
patients from most countries have been made available to the
scientific community \cite{pmid33057582}, allowing a better
understanding of the geographical and temporal spreading of
SARS-CoV-2, including the indication of non-synonymous genomic
variants that may explain the increased replication rate and immune
escaping of some variants \cite{pmid34395364} Today, the Global
Initiative on Sharing Avian Influenza Data ({\bf GISAID}) database is
arguably the primary archive for SARS-CoV-2 genome sequences
\cite{pmid33893460}. 

Although vaccination is the most effective means of preventing
COVID-19 illnesses and related deaths \cite{pmid34324480}, additional
efforts employing genomic surveillance have proven to be a useful tool
for guiding upcoming measures to control virus transmission
\cite{pmid34337584, pmid29129921}. The sequencing of Viral RNA
genomes directly recovered from wastewater has recently gained
attention for providing an opportunity to assess circulating viral
lineages \cite{pmid34323811, pmid34058182, pmid33501452,
  pmid32942178,Jahn2021}. These studies take advantage of
shotgun-based sequencing protocols \cite{pmid28337072} followed by
the most typical computational workflows to unveil the genomic
diversity present in a given sample, consisting of (1) sequencing
quality profiling; (2) removal of host/rRNA data; (3) assembly of
reads; and (4) attribution of taxonomy \cite{pmid28337072,
  pmid33416890}. It is also be clear that this approach can be applied
in other sources of urban metagenomic surveilance \cite{Dankoetal}. 

The main contribution of this article consists in the development of a
statistical model to infer the relative proportions and frequencies of
the genomic variants of SARS-CoV-2 present in varying amounts in a
given sample. This task is far from trivial as the sequencing reads
deriving from a sample consist of relatively short sequences ($\approx
100-200$ bases long) that can be mapped to multiple variants of the
virus. The sequencing reads derived from a given locus of the viral
genome may be different across the individuals of the same variant due
to intra-clade variation. Also, the small proportion of sequenced
reads likely to align to the SARS-CoV-2 in the midst of a complex
mixture of other RNAs (human, viral, bacterial and others) may lead to
reduced vertical coverage of the viral genome, therefore decreasing
the likelihood of an effective variant monitoring. 

Here, we propose a viral composition deconvolution approach based on
the relative frequencies of genomic polymorphic markers found in
SARS-CoV-2 variants. These markers, either Single Nucleotide
Polymorphisms (SNPs) or INsertion or DELetion of bases (INDELs), are
selected from public SARS-CoV-2 data \cite{pmid28382917,
  pmid33972780} and their presence/absence in known SARS-CoV-2
variants is used to fit a mixture model to viruses, derived from
different subjects, found in a complex mixture. Following this, our
method calculates a maximum likelihood estimate of the relative
contributions of SARS-CoV-2 variants to the pool. The performance of
the test was evaluated in simulated and real data. Our analysis using
122 sequencing data-points from wastewater treatment plants collected
in Switzerland, show close correlation with epidemiological trends of
COVID-19 in that region, which demonstrates the utility of this
approach to guide public policies 

\section{Variant composition model}

Our model is built over a previously defined selection of genomic
polymorphisms, which characterize SARS‑CoV‑2 variants, and a matrix
$P$ of `variant signatures'. Formally, let $P = (P_{ij})$ be a $s
\times v$ matrix such that $P_{ij}$ corresponds to the probability of
finding an alternate sequence at polymorphism $i$ from variant
$j$. Details about the selection of polymorphisms of interest and the
construction of $P$ are provided in Section~\ref{sec:markers}. Given
the matrix $P$ and a data sample containing the counts of DNA fragment
readings aligned to the respective polymorphic loci at SARS-CoV-2
genome, we aim at estimating the vector $w = (w_1, w_2, \ldots, w_v)$
of the relative compositions of SARS-CoV-2 variants in the sample,
such that
\begin{equation}\label{eqn:freqs_sum}
  0 \leq w_j \leq 1, \qquad \sum_{j=1}^v w_j = 1.
\end{equation}

Let $\{\Poly_i\}$, $i=1, 2, \ldots, v$, be the set of polymorphisms of
interest. The data provided by a sample consists of counts of reads
supporting the reference sequence $c_i^r$ and those supporting an
alternate sequence, $c_i^a$ at each $\Poly_i$. Two crucial remarks
about the data are that the coverage, that is $c_i^a+c_i^r$, may vary
among polymorphisms, and that the data does not shows which variants
account for the actual observed counts $c_i^r$ and $c_i^a$. The latter
occurs because:
\begin{itemize}
\item the reads that constitute the sample are
  relatively small segments of the viral genomes,
\item variants may share few $\Poly_i$ events,
\item there is random variability among the polymorphisms across the
  individuals of the same variant.
\end{itemize}

A relation between counts and variants is made by introducing the
latent variables $Z_{ij}^r$ and $Z_{ij}^a$, representing respectively
the counts of reads bearing the reference sequence for a $\SNP_i$
event originating from variant $j$ and those bearing the reference
base at the position $i$ from the variant $j$. In this case
\begin{equation}
  \label{eqn:latent&counts}
  \sum_{j=1}^v Z_{ij}^r = c_i^r
  \qquad\text{and}\qquad
  \sum_{j=1}^v Z_{ij}^a = c_i^a.
\end{equation}
Let $C_i^r$ and $C_i^a$ be random variables which for a given sample
take on the values $c_i^r$ and $c_i^a$. Let $\txcal{Z}$ be the set of
possible values of $Z_{ij}^a$ and $Z_{ij}^r$ satisfying the
constraints imposed by \eqref{eqn:latent&counts}.  For each $i = 1, 2,
\ldots, s$, let $C_i = (C_i^r, C_i^a)$ and $Z_i = (Z_i^r, Z_i^a)$,
where $Z^n_i = (Z_{i1}^n, \ldots, Z_{iv}^n)$, $n \in \{a, r\}$. Denote
by $t = (t_1, t_2, \ldots, t_s)$ the coverage vector, namely a vector
such that $t_i$ corresponds to the total number of reads at the locus
of $\SNP_i$ observed in given sample, $t_i = c_i^a + c_i^r$. The
values taken by the latent variables are denoted by using lower case
symbols accordingly. Assuming independence between the events
$\SNP_i$, $i = 1, 2, \ldots, s$, the likelihood function for $w$ at a
given sample and a given $P$ matrix is determined by
\[
  L(w) = \Prb\big(C = c \mid w, P, t\big)
  =
  \prod_{i=1}^s\prod_{n\in \{r, a\}} \Prb\big(C_i^n = c_i^n \mid w,
  P, t_i\big)
\]
Thus, by considering the latent variables, the likelihood can be
written as
\begin{align*}
  L(w)
  &=
  \prod_{i=1}^s\prod_{n \in \{r, a\}}\sum_{z_i \in \txcal{Z}}
    \Prb\big(C_i^n = c_i^n \mid  Z_i^n=z_i^n, w, P, t_i\big)
    \Prb\big(Z_i^n = z_i^n \mid w, P, t_i\big)
    \\
  &=
    \prod_{i=1}^s\prod_{n \in \{r, a\}} \sum_{z_i \in \txcal{Z}}
    \mathbf{1}_{\big\{\sum_j Z_{ij}^n =
    c_i^n\big\}}
    \Prb\big(Z_i^n = z_i^n \mid w, P, t_i\big),
\end{align*}
where the summation over $z_i \in \txcal{Z}$ considers all possible
values for the latent variables subject to the constraints
\eqref{eqn:latent&counts}. 

Since the sequencing process picks DNA fragments at random from the
studied pool, let us assume that the distribution of the latent
variables is multinomial with parameters given by the relative
proportions of RNAs from each variant, supporting or not a
mutation. If the proportion of the $j$-th variant is $w_j$ and the
probability that this variant presents a mutation at position $i$ is
$P_{ij}$, then the fraction of total RNA originated from variant $j$
supporting an altered base at $\SNP_i$ equals $P_{ij}w_j$. Likewise,
the fraction of total RNA originated from variant $j$ supporting the
reference base at $\SNP_i$ is $(1-P_{ij})w_j$. So, conditionally on
$w$, $P$ and $t_i$, the law of the latent variables, $\Prb\big(Z_i =
z_i \mid w, P, t_i\big)$, equals
\[
  \bigg({t_i \atop z_{i1}^a, \ldots, z_{iv}^a, z_{i1}^r, \ldots,
    z_{is}^r}\bigg) 
  \prod_{j=1}^v \big(P_{ij}w_j\big)^{z_{ij}^a}
  \big((1-P_{ij})w_j\big)^{z_{ij}^r}.
\]

As shown by the following lemma, the log-likelihood for the model of
variant proportions presented so far admits a closed form. The proof
of this lemma is presented in the Appendix.

\begin{lemma}\label{lem:loglik} For a given sample with counts $c_i$,
$ 1 \leq i \leq s$, and a given signature matrix $P$, the log
likelihood function for the variant proportions up to a constant
equals
\begin{equation}\label{eqn:loglik}
  \ell(w) \propto \sum_{i=1}^s c_i^a \log\bigg(\sum_{j=1}^v
  P_{ij}w_j\bigg) + \sum_{i=1}^s c_i^r\log\bigg(1 - \sum_{j=1}^v
  P_{ij}w_j\bigg).
\end{equation}
\end{lemma}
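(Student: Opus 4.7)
The plan is to compute $\Prb(C_i^a = c_i^a, C_i^r = c_i^r \mid w, P, t_i)$ in closed form by summing out the latent variables, recognize the result as the mass function of a binomial with success parameter $\pi_i := \sum_j P_{ij} w_j$, and then take logarithms and drop the $w$-independent terms.

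The key algebraic step is to rewrite the multinomial coefficient under the constraints \eqref{eqn:latent&counts} via the factorization
\[
  \binom{t_i}{z_{i1}^a, \ldots, z_{iv}^a, z_{i1}^r, \ldots, z_{iv}^r}
  = \binom{t_i}{c_i^a} \binom{c_i^a}{z_{i1}^a, \ldots, z_{iv}^a} \binom{c_i^r}{z_{i1}^r, \ldots, z_{iv}^r},
\]
valid whenever $\sum_j z_{ij}^a = c_i^a$ and $\sum_j z_{ij}^r = c_i^r = t_i - c_i^a$. Once this is in place, the constrained sum over $z_i \in \txcal{Z}$ decouples into two independent sums, one over $z^a$ with total $c_i^a$ and one over $z^r$ with total $c_i^r$. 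Each sum is an instance of the multinomial theorem and evaluates to $\big(\sum_j P_{ij} w_j\big)^{c_i^a}$ and $\big(\sum_j (1-P_{ij}) w_j\big)^{c_i^r}$ respectively. The second factor simplifies to $(1 - \pi_i)^{c_i^r}$ using $\sum_j w_j = 1$ from \eqref{eqn:freqs_sum}, so the joint mass function becomes $\binom{t_i}{c_i^a} \pi_i^{c_i^a} (1 - \pi_i)^{c_i^r}$. Probabilistically, this is the expected conclusion: after marginalizing over variant identity, each read at locus $i$ is independently alt with probability $\pi_i$.

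To finish, I would multiply over $i = 1, \ldots, s$, take the logarithm, and absorb the combinatorial factors $\log\binom{t_i}{c_i^a}$ into an additive constant, obtaining \eqref{eqn:loglik}. The main obstacle is purely combinatorial bookkeeping in the latent-variable sum: one must confirm that the constraints are exactly what is needed for the $2v$-category multinomial coefficient to split into one binomial coefficient and two smaller multinomial coefficients, which is what licenses the two invocations of the multinomial theorem. A secondary notational point to reconcile is that the product $\prod_{n \in \{r,a\}}$ appearing in the factored form of $L(w)$ should be read as the joint probability $\Prb(C_i^r = c_i^r, C_i^a = c_i^a \mid w, P, t_i)$ rather than as two independent marginal factors, since $c_i^r + c_i^a = t_i$ forces perfect dependence on the support; the computation above produces exactly this joint probability.
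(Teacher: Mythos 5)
Your proposal is correct and follows essentially the same route as the paper's proof: the same factorization of the $2v$-category multinomial coefficient into $\binom{t_i}{c_i^a}$ times two smaller multinomial coefficients, the same two applications of the multinomial theorem to the decoupled sums over $z^a$ and $z^r$, and the same use of $\sum_j w_j = 1$ to reduce $\sum_j(1-P_{ij})w_j$ to $1-\sum_j P_{ij}w_j$ before taking logarithms and dropping the combinatorial constants. Your closing remark that the product over $n\in\{r,a\}$ must be read as the joint probability of $(C_i^a, C_i^r)$ is a fair clarification of the paper's notation, but it does not alter the argument.
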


Estimates for $w$, the proportion of each variant in a sample, are
obtained by maximization of $\ell(w)$ in \eqref{eqn:loglik}. These are
hereafter denoted by $\widehat{w}$ and eventually, to emphasize their
dependence upon the sample $c$, also by $\widehat{w}(c)$. The
maximization of $\ell(w)$ is made as described in
Section~\ref{sec:maximization}. Standard error estimates for $w$ are
obtained via bootstrapping, see Section~\ref{sec:bootstrap}. 
 
\section{Methods}

\subsection{Variant characterization by polymorphism-based
  markers}\label{sec:markers} 

This study takes advantage of publicly available data from {\bf
  GISAID}~\cite{pmid28382917}. Pre-defined SARS‑CoV‑2 lineages were
assigned to variant groups (denoted hereafter as \emph{VG}, see
supplementary Table \ref{tab:table-vg}), according to variants
currently defined by the World Health Organization ({\bf WHO},
\url{https://www.who.int/en/activities/tracking-SARS-CoV-2-variants/}). Next,
a list of manually curated genome designations (v1.2.60,
\url{https://github.com/cov-lineages/pango-designation}) was obtained
from the Pango Lineage Designation Committee \cite{pmid32669681} and
each genome was assigned to the corresponding {\bf VG}. These genomes
were mapped to the SARS-CoV-2 reference genome \cite{pmid32015508}
using minimap2 v2.22 with map-pb mode \cite{pmid29750242}. {\bf GATK}
Mutect2 v4.2.2.0 was then used with default settings
\cite{pmid21478889} to call all polymorphisms (SNP and INDELs) in the
aligned sequences for each {\bf VG}. Finally, polymorphism-based
markers, denoted hereafter as {\bf PBM}, were extracted from
well-characterized variants (Figure \ref{fig:markers}). Alternatively,
we also created a equivalent marker matrix from the phylogenetic tree
compiled by \cite{pmid33972780}, available at
\url{https://hgdownload.soe.ucsc.edu/goldenPath/wuhCor1/UShER_SARS-CoV-2},
for further validation of the former matrix signature. Only
polymorphic sites with allele frequencies greater than 80\% in at
least one {\bf VG} were considered as valid markers. Although our
pipeline can provide a useful wrapper for marker calling, it is
important to note that it offers flexibility for the user to load its
own selection of markers in the variant call format ({\bf VCF}) file. 

\subsection{Wastewater dataset, dilution experiments and polymorphism
  calling}\label{sec:wastewater} 

A real data set constituted by longitudinal samples from two
wastewater treatment plants in Switzerland \cite{Jahn2021}, including
Zürich (64 samples, Jul 2020-Feb 2021)  and Lausanne (49 samples, Sep
2020-Feb 2021) were downloaded from Sequence Read Archive
(SRA,\url{https://www.ncbi.nlm.nih.gov/sra})  under project accession
number PRJEB44932. In addition, replicate dilution series experiments
containing RNA samples of  cultivated Wild type \cite{pmid32015508}
and $Alpha$/B.1.1.7 solutions mixed (ratios of 10:1, 50:1 and 100:1)
were also obtained from the same  study. The raw sequences were
previously aligned to the SARS-CoV-2 reference genome
\cite{pmid32015508} as described in \cite{Jahn2021}  and then loaded
into GATK-Mutect2 v4.2.2.0 with the optional argument -alleles
\cite{pmid21478889} to report the coverage of all PBM sites.

\subsection{Simulation study}

In order to test our method, we simulated samples with randomly
generated variant compositions. Aiming at reproducing real data, the
wastewater dataset described in Section~\ref{sec:wastewater} was used
as a template to generate the coverage distribution of simulated
samples.

For each simulated sample, a real wastewater sample was randomly
selected and its total coverage was reproduced. The distribution of
reads covering each polymorphism was generated by considering a
multinomial distribution, in which the probabilities of a read
covering each locus were proportional to the total number of reads
observed at the respective loci in the selected sample. This gives a
mock coverage $t = (t_1, t_2, \ldots, t_s)$, which stores the
simulated number of reads covering the locus of each $\SNP_i$.

Relative variant frequencies, $w^{\text{sim}}_j$, were also generated
randomly and, for each $\SNP_i$, $t_{i}$ reads were distributed among
variants according to those frequencies using a multinomial
distribution. This procedure generates $R_{ij}$, the number of reads
aligned to $\SNP_i$ originating from variant $j$. Finally, for each
$R_{ij}$, the number of reads supporting the alternate sequence at
$\SNP_i$ was generated by a Binomial$(n,p)$ distribution with $n =
R_{ij}$ and  $p = P_{ij}$. Simulated data were then obtained by
summing up the number of simulated reads supporting the reference or
the alternate sequence for each polymorphism. These were inputed to
the model and estimated compositions were compared to those used in
each simulation. The accuracy of the results was measured by the mean
absolute error
\begin{equation}\label{eqn:mae}
\frac{1}{v} \sum_{j=1}^v | \widehat{w}_j - w_j|
\end{equation}

\subsection{Likelihood maximization}\label{sec:maximization}

Let $\sS$ be the convex set defined by the unitary $(v-1)$-simplex,
that is $\sS = \{w \in \mathbb R^v : w \text{ satisfies }
\eqref{eqn:freqs_sum}\}$, with $v$ as the number of SARS-CoV-2
variants in a pool.  The following lemma ensures that the maximization
of the log-likelihood function defined by Lemma~\ref{lem:loglik} is a
well posed problem.
\begin{lemma}\label{lem:convexity} The function $\ell : \sS \to
  \mathbb R$ defined in Lemma \ref{lem:loglik} is concave.
\end{lemma}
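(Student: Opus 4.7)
The plan is to exploit the standard fact that the composition of a concave function with an affine map is concave, together with the closure of concave functions under nonnegative linear combinations. Concretely, for each fixed $i$ the map $w \mapsto \sum_{j=1}^v P_{ij}w_j$ is a linear (hence affine) function of $w \in \mathbb{R}^v$, and so is $w \mapsto 1 - \sum_{j=1}^v P_{ij}w_j$. Since $\log$ is strictly concave on $(0,\infty)$, the two maps $w \mapsto \log\bigl(\sum_j P_{ij}w_j\bigr)$ and $w \mapsto \log\bigl(1-\sum_j P_{ij}w_j\bigr)$ are concave on the subset of $\sS$ where their arguments are positive.

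Next, I would check that these arguments remain in $[0,1]$ on $\sS$: because $0 \le P_{ij} \le 1$ and $w \in \sS$ satisfies $w_j \ge 0$, $\sum_j w_j = 1$, one has $0 \le \sum_j P_{ij}w_j \le \sum_j w_j = 1$. Hence both logarithms are well defined on the relative interior of $\sS$, and can be extended to the closure by the convention $\log 0 = -\infty$, under which concavity of $\log$ as an extended real-valued function still holds. Thus each summand of \eqref{eqn:loglik} is concave on $\sS$.

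Finally, since the counts $c_i^a$ and $c_i^r$ are nonnegative integers, $\ell(w)$ is a nonnegative linear combination of concave functions, and hence concave on $\sS$. As $\sS$ itself is a convex set (being the unitary $(v-1)$-simplex), the conclusion follows.

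The only delicate point I would watch is handling the boundary of $\sS$ where some $\sum_j P_{ij}w_j$ may vanish or equal $1$, making a $\log$ term equal to $-\infty$. The cleanest way to deal with this is to work with $\ell$ as an extended real-valued concave function on $\sS$, or equivalently restrict attention to $\{w \in \sS : 0 < \sum_j P_{ij}w_j < 1 \text{ for all } i\}$, which is convex and contains the feasible region on which the maximization is actually performed.
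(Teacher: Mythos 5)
Your proof is correct, but it takes a genuinely different route from the paper. The paper's proof is a direct second-order argument: it computes the Hessian $H$ of $\ell$ explicitly, writing $H_{eo} = -\sum_{i=1}^s P_{ie}P_{io}Q_i(w)$ with $Q_i(w) = c_i^a/(\sum_j P_{ij}w_j)^2 + c_i^r/(1-\sum_j P_{ij}w_j)^2 \geq 0$, and then observes that $u^t H u = -\sum_i Q_i(w)\bigl(\sum_e P_{ie}u_e\bigr)^2 \leq 0$, so $H$ is negative semi-definite. You instead use first principles of convex calculus: $\log$ composed with an affine map is concave, and a nonnegative linear combination of concave functions is concave. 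Both arguments are valid. Your approach is shorter, requires no differentiability, and handles the boundary of $\sS$ more carefully than the paper does --- the paper's $Q_i(w)$ is undefined where $\sum_j P_{ij}w_j \in \{0,1\}$, a point the Appendix proof silently ignores, whereas you explicitly propose the extended-real-valued convention or restriction to the set where the arguments are interior. (Your approach is also, in spirit, exactly the ``disciplined convex programming'' certificate that CVXPY uses to accept this objective, as noted in Section~\ref{sec:maximization}.) What the paper's computation buys in exchange is the explicit Hessian, which is reusable for Newton-type optimization or observed-information-based standard errors, though the paper does not in fact use it for either. One small point worth stating explicitly in your write-up: the entries $P_{ij}$ lie in $[0,1]$ because they are probabilities (as defined in Section 2), which is what justifies your bound $0 \leq \sum_j P_{ij}w_j \leq 1$; you use this but it deserves a one-line citation of the definition of $P$.
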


The proof of Lemma~\ref{lem:convexity} is presented in the
Appendix. The maximization of the log-likelihood is implemented with
\emph{CVXPY} v1.1.15 \cite{diamond2016}, a Python-embedded modeling
language for convex optimization problems. \emph{CVXPY} uses
disciplined convex programming, a system for constructing mathematical
expressions with known curvature.

\subsection{Standard errors of estimates}\label{sec:bootstrap}

Estimates for the standard error of $\widehat{w}$ are obtained by
bootstrapping. For a given sample $c = (c_i^a, c_i^r)$, $i = 1, 2,
\ldots, s$, let $c^{*b}$, $b=1, 2, \ldots, B$, be a set bootstrap
replications, see \cite{ET93}, Chapter 2. A bootstrap estimate of the
standard error of $\widehat{w}$ is then given by
\[
  \widehat{\text{se}}_{\text{boot}}
  =
  \bigg\{\sum_{b=1}^B \big(\widehat{w}(c^{*b}) -
  \widehat{w}_{\text{boot}}\big)^2/(B-1)\bigg\}^{\frac12}
\]
where $\widehat{w}_{\text{boot}} = \sum_{b=1}^B
\widehat{w}(c^{*b})/B$. Results described here were obtained by using
$B=100$ bootstrap samples. 

\subsection{Code and markers availability}\label{sec:code-availability}
The markers matrix and the computational pipeline, including the
construction of variant markers, polymorphism calling, the likelihood
maximization and standard error estimates described in
Sections~\ref{sec:markers}, \ref{sec:wastewater},
\ref{sec:maximization} and \ref{sec:bootstrap}, are available at
\url{https://github.com/rvalieris/LCS}. 

\section{Results}

\subsection{Generating polymorphism-based markers of SARS-CoV-2 variants}

We have built a list of polymorphism based markers to distinguish
known SARS-CoV-2 variants. A list of SARS-CoV-2 variant groups ({\bf
VG}) defined by WHO was initially considered. These variants were
assigned to a list of manually curated genomes from pango lineage
designation. We performed alignment and variant calling in all groups,
generating a total of 371 polymorphisms (343 SNPs, 28 InDels). Lastly,
polymorphisms with high frequency ($>$80\%) in each group were used in
an unsupervised clustering procedure. As a result, Figure
\ref{fig:markers} shows that this procedure was capable to define
clusters of polymorphic sites that are predominantly associated to
each SARS-CoV-2 {\bf VG}. We also compared the markers found by this
analysis with SNPs from the phylogenetic tree compiled by
\cite{pmid33972780} coupled with the respective frequency in each {\bf
VG}. The obtained SNPs allowed the identification of the same
SARS-CoV-2 {\bf VG}s (Supplementary Figure \ref{suppFig1}A) detected
by the former approach (Figure \ref{fig:markers}). Further, the
predictions made by using both markers are very similar (Supplementary
Figure \ref{suppFig1}B). We conclude that either the pango-designation
sequences or the phylogenetic tree \cite{pmid33972780} approach can
be used to select the polymorphic markers of SARS-CoV-2 variants
required by our method.

\begin{figure}[!htpb]
\centering
\includegraphics[width=10cm]{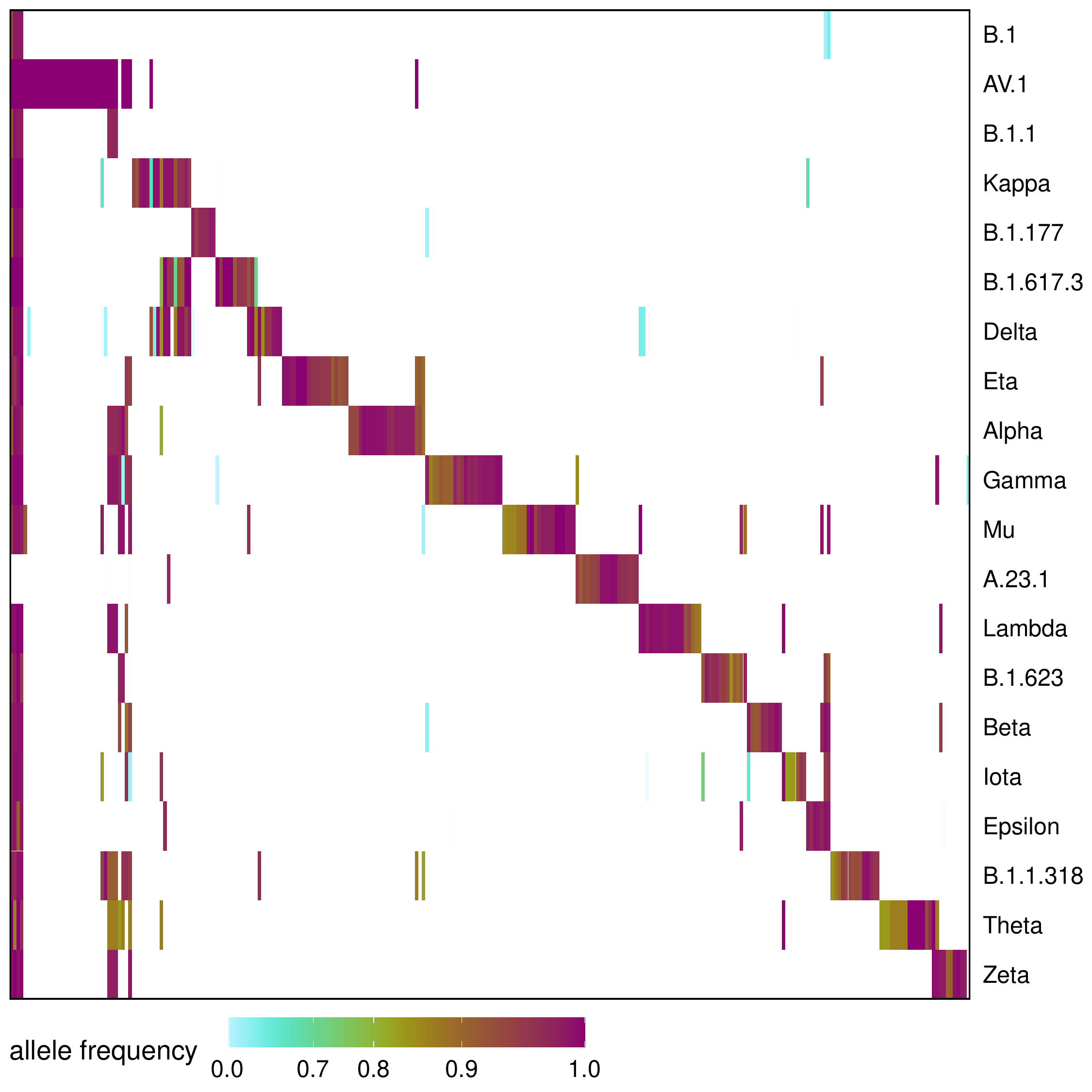}
\caption{Polymorphism-based markers across the SARS-CoV-2 {\bf
    VG}. The selected markers and SARS-CoV-2 {\bf VG} are listed
  respectively as columns and rows. Cells in  the heatmap are colored
  according to the relative frequency of each polymorphism.}
\label{fig:markers}
\end{figure}

\subsection{Simulation study}

To evaluate the performance of our method in predicting SARS-CoV-2
variant composition, we generated a synthetic dataset with 2000
simulated samples, considering non  uniform coverage. As shown by
Figure~\ref{fig:dryrun}, our method performed well when applied to
this data set. As expected,  estimations are more accurate when based
in variants found in higher relative frequencies (Figure
\ref{fig:dryrun}A). Mean absolute errors  (see \eqref{eqn:mae}) were
below 1\% in most cases, specially for variants with relative
frequencies above 25\% (Figure  \ref{fig:dryrun}B). Figure
\ref{fig:dryrun}C shows that the mean absolute error strongly depends
on sample coverage (the distribution of  simulated samples coverage
reflects the respective distribution on the wastewater dataset
considered in  Section~\ref{sec:Wastewater}). Finally, the adopted
bootstrap estimates of standard errors can provide accurate limits for
the true error, as shown in  Figure \ref{fig:dryrun}D. Results
obtained while analyzing few simulated data samples are summarized in
Table~\ref{tab:table-bootstrap}. In  particular, this table presents
how results are sensitive to sample coverage and composition
complexity, as well as the accuracy of  adopted bootstrap approach to
error estimation.

\begin{figure}[!htbp]
\centering
\includegraphics[width=12cm]{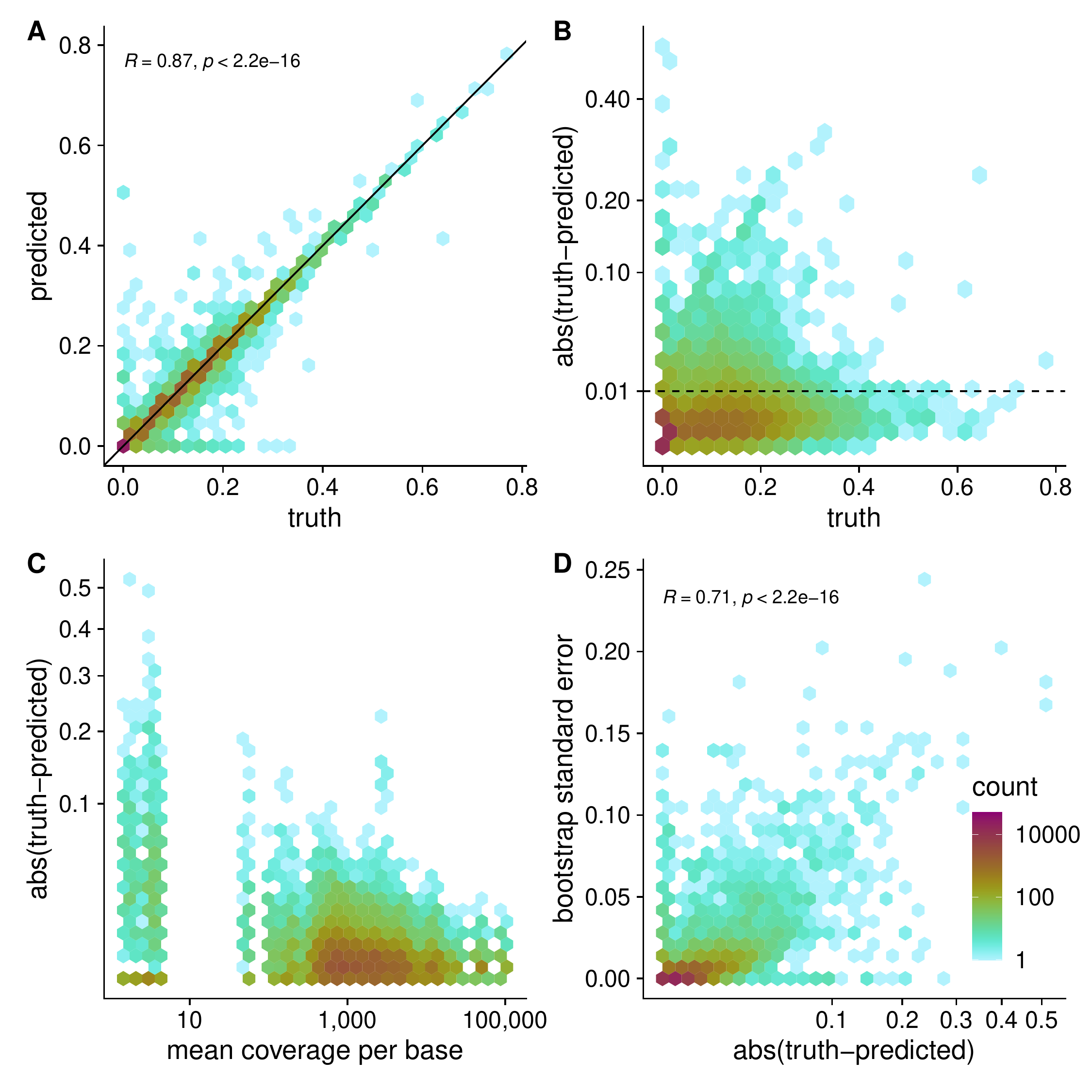}
\caption{Performance measures for the proposed estimator. The number
  of points within each hexagon is reflected by its color, according
  to the scale on the right side.  \textbf{A}. Model estimated
  proportions vs proportions actually used in
  simulations. \textbf{B}. Absolute errors in model estimations  vs
  simulated proportions. \textbf{C}. Mean absolute error in model
  estimations vs mean sample  coverage.  \textbf{D}. Bootstrap
  standard error for model estimations vs absolute error.}
\label{fig:dryrun}
\end{figure}

\begin{table}[]
\centering
\small
\caption{Results for 4 simulated samples. Mean Coverage (MC): average
  number of reads aligned to each considered polymorphism, that is
  $s^{-1} \sum_{i=1}^s t_i$. $w$: relative frequencies of each VG
  present in the simulation (ground truth); $\widehat{w}$: estimates
  of relative frequencies;  $\widehat{w}_{\text{boot}}$: estimated
  mean of bootstrap samples; $\widehat{\text{se}}_{\text{boot}}$:
  bootstrap standard error estimate for $\widehat{w}$.}
\label{tab:table-bootstrap} 
\begin{tabular}{cclllll}
\hline
Sample & MC & VG & $w$ & $\widehat{w}$ & $\widehat{w}_{\text{boot}}$ & $\widehat{\text{se}}_{\text{boot}}$ \\ \hline
\multirow{10}{*}{S\_53} & \multirow{10}{*}{2.03}    & B.1    & 0.123    & 0.187    & 0.112    & 0.156    \\
                                            &   & B.1.1.318             & 0.137  & 1.4e-08  & 1.63e-3  & 0.0163   \\
                                            &   & B.1.177               & 0.0411 & 0.0756   & 0.0708   & 0.0504   \\
                                            &   & B.1.617.3             & 0.0548 & 0.0627   & 0.111    & 0.138    \\
                                            &   & B.1.621               & 0.0822 & 0.0578   & 0.0503   & 0.0385   \\
                                            &   & Delta      & 0.137  & 0.215    & 0.196    & 0.0903   \\
                                            &   & Epsilon & 0.0822 & 0.181    & 0.176    & 0.0951   \\
                                            &   & Iota         & 0.0959 & 0.0316   & 0.0314   & 0.0319   \\
                                            &   & Lambda          & 0.137  & 0.122    & 0.12     & 0.0795   \\
                                            &   & Zeta             & 0.11   & 0.0682   & 0.0704   & 0.043    \\ \hline
\multirow{8}{*}{S\_1583}  & \multirow{8}{*}{2.94} & B.1                   & 0      & 0.498    & 0.377    & 0.181    \\
                                            &   & B.1.1                 & 0.323  & 6.16e-08 & 0.0275   & 0.106    \\
                                            &   & B.1.1.318             & 0.0323 & 4.7e-09  & 5.89e-09 & 3.52e-09 \\
                                            &   & B.1.177               & 0.0323 & 3.86e-08 & 0.014    & 0.0866   \\
                                            &   & B.1.617.3             & 0.258  & 0.0877   & 0.0959   & 0.048    \\
                                            &   & B.1.623               & 0.258  & 0.237    & 0.253    & 0.053    \\
                                            &   & Epsilon & 0      & 0.0849   & 0.125    & 0.125    \\
                                            &   & Gamma            & 0.0968 & 0.0916   & 0.0942   & 0.0325   \\ \hline
\multirow{17}{*}{S\_221} & \multirow{17}{*}{732.37} & Alpha & 0   & 6.43e-4      & 6.71e-4                    & 2.52e-4                            \\
                                            &   & AV.1                  & 0.0722 & 0.0748   & 0.0747   & 2.38e-3  \\
                                            &   & B.1                   & 0.0928 & 0.107    & 0.108    & 0.0206   \\
                                            &   & B.1.1                 & 0.0825 & 0.0588   & 0.0592   & 0.0203   \\
                                            &   & B.1.1.318             & 0.0515 & 0.048    & 0.048    & 2.19e-3  \\
                                            &   & B.1.177               & 0      & 1.35e-3  & 1.33e-3  & 7.32e-4 \\
                                            &   & B.1.617.3             & 0.103  & 0.105    & 0.106    & 3.39e-3  \\
                                            &   & B.1.621               & 0.103  & 0.104    & 0.104    & 2.37e-3  \\
                                            &   & B.1.623               & 0.103  & 0.104    & 0.104    & 2.59e-3  \\
                                            &   & Beta         & 0      & 1.64e-3  & 1.71e-3  & 5.7e-4  \\
                                            &   & Delta      & 0.103  & 0.101    & 0.101    & 3.34e-3  \\
                                            &   & Gamma            & 0.0722 & 0.0752   & 0.0752   & 2e-3    \\
                                            &   & Iota         & 0      & 4.11e-3  & 3.95e-3  & 1.37e-3  \\
                                            &   & Kappa      & 0.103  & 0.104    & 0.103    & 4.29e-3  \\
                                            &   & Lambda          & 0.0515 & 0.0522   & 0.0522   & 1.89e-3  \\
                                            &   & Theta            & 0      & 5.95e-4 & 5.87e-4 & 2.09e-4 \\
                                            &   & Zeta             & 0.0619 & 0.0575   & 0.0572   & 3.3e-3   \\ \hline
\multirow{6}{*}{S\_1645} & \multirow{6}{*}{24740.78} & AV.1                  & 0.353  & 0.353    & 0.353    & 5.63e-4 \\
                                            &   & B.1.621               & 0.118  & 0.117    & 0.117    & 3.91e-4 \\
                                            &   & Beta         & 0      & 6.98e-4 & 6.54e-4 & 2.22e-4 \\
                                            &   & Gamma            & 0.118  & 0.118    & 0.118    & 4.31e-4 \\
                                            &   & Iota         & 0.118  & 0.116    & 0.116    & 7.6e-4  \\
                                            &   & Theta            & 0.294  & 0.294    & 0.294    & 4.67e-4 \\ \hline

\end{tabular}%
\end{table}

\subsection{Wastewater data tests}\label{sec:Wastewater}

As SARS-CoV-2 variants are continuously spreading and evolving, the
environment surveillance has come into play in help bringing the
pandemic under control. Thus, considering that wastewater samples
provide a screenshot of circulating viral lineages in the community
\cite{pmid33024908}, we assess the reliability and utility of our
method to unveil the SARS-CoV-2 diversity from genomic sequencing data
of samples collected over time in two Swiss wastewater treatment
plants of Z\"urich (located in the canton of Z\"urich) and Lausanne
(located in the canton of Vaud),
\cite{Jahn2021}. Figure~\ref{suppFig2} in the Supplementary material
provides an overview of the polymorphism frequencies of all markers in
the pooled samples. We first recovered the relative frequencies of all
SARS-CoV-2 variants (Figure \ref{fig:density}A) from the cantons of
Z\"urich and Vaud, previously deposited in GISAID database. Next, we
compared these with the proportion of SARS-CoV-2 \emph{VG}s decomposed
by our method considering the viral sequencing reads obtained from the
wastewater samples (Figure \ref{fig:density}B). A \texttt{loess}
regression, implemented in the R statistical programming language, was
used to interpolate the proportion of variants between the missing
time periods in the actual longitudinal data samples (Figure
\ref{fig:density}C). The evolution of the inferred relative
frequencies for the Alpha variant are shown in Supplementary Figure
\ref{suppFig3}.  Results show the quick spread of the Alpha variant in
both cantons in early December 2020 and January 2021. The trend
observed in wastewater samples of both regions matches quite well the
one observed in the GISAID data for COVID-19 patients.

We also explored RNA samples of SARS-CoV-2 used to assess the
reproducibility of B.1.1.7 prevalence in a dilution series experiment
described in \cite{Jahn2021}. These samples contain a mixture of wild
type and Alpha/B.1.1.7 SARS-CoV-2 at ratios 10:1, 50:1 and 100:1, and
each one was sequenced five times. We observed that the estimated
composition is consistent with the respective dilutions (Supplementary
Figure \ref{suppFig4} and \ref{suppFig5}A). By merging the 5
replicates into a single sample, the overall coverage improves the
estimates of variant composition predictions (Supplementary Figure
\ref{suppFig5}B). However, we noted a small proportion of the {\bf VG}
\emph{A.23.1} and looking back at the marker heatmap (Supplementary
Figure \ref{suppFig4}), all dilution samples reveal a mutation
(\emph{S:V367F}) in high frequency which is a known marker of
\emph{A.23.1} \cite{pmid34163035}. Since the frequency of this
mutation is not consistent with the dilution amounts as expected, we
believe this is likely to be a sequencing artifact or contamination.

\begin{figure}[h]
\centering
\includegraphics[width=12cm]{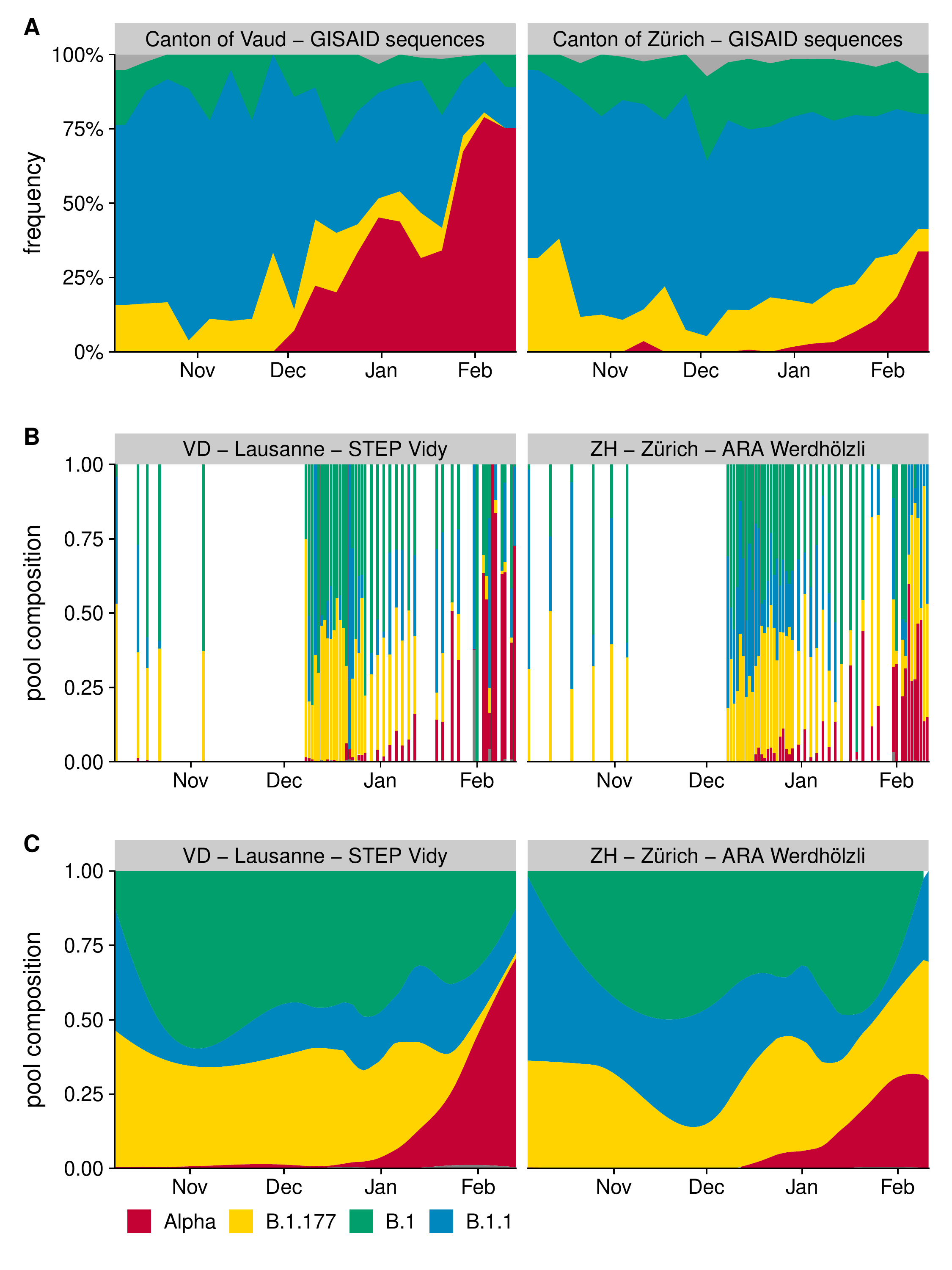}
\caption{(A) Area plot showing the longitudinal surveillance of Alpha
  (red) and other SARS-CoV-2 variants based on submitted sequences
  from {\bf GISAID} for the Cantons  of Vaud (left) and Z\"urich
  (right) between October 2020 and February 2021. (B) Viral
  composition estimated from sequencing data of  wastewater
  longitudinal samples collected in the sewage treatment plants of
  Lausanne (Canton of Vaud) and Z\"urich  (Canton of Zurich). (C)
  Evolution of viral composition obtained by a \texttt{loess}
  interpolation of (B). 
}
\label{fig:density}
\end{figure}

\section{Discussion}
The ability to implement continuous molecular surveillance of
SARS-CoV-2 has helped to accurately detect the prevalence of viral
strains. For example, measurement of SARS-CoV-2 RNA in wastewater has
been shown to be a useful tool to track SARS-CoV-2 and thus may help
to support public health policies. In addition to this approach, a
promising genomic surveillance initiative in the city-scale monitoring
encompasses the swabbing of surfaces in highly accessed locations
including hospitals, airports, parks, subway and bus stations. These
efforts will allow the tracking of diverse pathogens and, for COVID-19
they may accelerate the discovery of new variants and anticipate the
detection of variants of clinical interest, that may lead to new waves
of disease and the potential failure of some vaccines to offer
long-term protection. These large-scale efforts underscore the
development of novel analytical tools to identify the prevalence of
viral diversity from Next-Generation Sequencing data. 

This paper describes a new end-to-end approach to assess the presence
of SARS-CoV-2 variant groups throughout mixed DNA samples. First, we
propose a model considering the relative frequencies of polymorphic
markers found in samples positive for SARS-CoV-2 genomes, likely to be
derived from multiple subjects, allowing the determination of variant
frequencies. Then, we systematically evaluate its performance by
simulating different sequencing depths and variant relative
frequencies. These dry runs highlighted the method accuracy and the
sensibility of its performance to low coverage. 

Next we evaluate the estimation of the relative frequency of
SARS-CoV-2 lineages in public genomic sequencing data constituted by
122 wastewater samples from two cantons in Switzerland. Our results
show trends in conformity with data from SARS-CoV-2-positive clinical
samples, and recovered the evolution of the lineages observed in these
cities. Our findings endorse the utility of viral RNA monitoring in
municipal wastewater for SARS-CoV-2 infection surveillance at a
population-wide level. 

In addition, our pipeline uses multi-threading for efficient
parallelization, and is designed on a scalable workflow engine
\cite{pmid29788404}. The software provides a wrapper for marker
calling in a bioconda environment \cite{pmid29967506}, but it also
allows the user to load their own marker selection in the variant call
format ({\bf VCF}). The output consists of two files, including a flat
file with diversity estimation and a VCF file containing annotation of
other, non-marker polymorphisms, for further analysis. The flexibility
allowed by the choice of custom polymorphism-based markers,
considerably widens the scope of the tools described throughout,
allowing either the analysis of other viruses or regional
epidemiological studies. 

There are however shortcomings in our approach. The sequence coverage
across the viral genome is crucial for the detection of
polymorphism-based markers, and the precise determination of
SARS-CoV-2 variants. Given that the viral composition assessment
relies on polymorphic markers, we advise the use of a high sensitivity
variant caller to detect all relevant polymorphisms. Finally, the
estimation of SARS-CoV-2 variant composition may carry some
uncertainty due to the stochastic nature of pool sequencing. We
overcome this by considering a bootstrap approach to estimate standard
errors in predictions, thus providing a measure of the reliability of
each result. 

In summary, we present an useful method to decompose reliable
SARS-CoV-2 lineages using sequencing reads obtained from mixed
samples. The effectiveness of our method on both synthetic and real
data sets further demonstrates its utility for tracking SARS-CoV-2. We
believe this will translate into applied tools that will aid in the
environmental genomic surveillance efforts against COVID-19 outbreaks
or future pandemics.

\section*{Acknowledgments}
We gratefully acknowledge the authors of the data shared via
GISAID. ED-N is a researcher from Conselho Nacional de Desenvolvimento
Cient\'ifico e Tecnol\'ogico (CNPq, Brazil).

\section*{Funding}
This project was funded by Fundação de Amparo à Pesquisa do Estado de
São Paulo (FAPESP), grant 2021/05316-2.

\section*{Appendix}
\begin{proof}[Proof of Lemma~\ref{lem:loglik}] Substitution of the
expression for the law of $Z_i$ into $L(w)$ and observing that $t_i =
c_i^r+c_i^a$ gives
\begin{align*}
  L(w)
  &=
    \prod_i \prod_{j} \sum_{z_i\in \txcal{Z}}\frac{t_i!}{z_{ij}^a!
    z_{ij}^r!} (P_{ij}w_j)^{z_{ij}^a}  ((1-  P_{ij})w_j)^{z_{ij}^r} \\
  &=
    \prod_i \sum_{z_i \in \txcal{Z}} \frac{t_i!}{c_i^a! c_i^r!}
    \frac{c_i^a! \prod_j  (P_{ij}w_j)^{z_{ij}^a}}{\prod_j z_{ij}^a!}
    \frac{c_i^r!  \prod_j ((1-
    P_{ij})w_j)^{z_{ij}^r}}{\prod_j z_{ij}^r!}  \\
  &=
  \prod_i \frac{t_i!}{c_i^a ! c_i^r !}
  \bigg(\sum_j P_{ij}w_j\bigg)^{c_i^a }
  \bigg(1-\sum_j P_{ij}w_j\bigg)^{c_i^r } \\
  &\qquad\qquad \times
  \sum_{z_i \in \txcal{Z}} \bigg[ \frac{c_i^a !}{\prod_j
    z_{ij}^a!}  \bigg(\frac{P_{ij}w_j}{\sum_j P_{ij}w_j}\bigg)^{c_i^a}  
    \frac{c_i^r !}{\prod_j z_{ij}^r!}
  \bigg(\frac{(1- P_{ij})w_j}{1-\sum_j P_{ij}w_j}\bigg)^{c_i^r } \bigg] \\
  &=
  \prod_i \frac{t_i!}{c_i^a ! c_i^r !}
  \bigg(\sum_j P_{ij}w_j\bigg)^{c_i^a }
  \bigg(1-\sum_j P_{ij}w_j\bigg)^{c_i^r } 
  \sum_{z_i^a \in \txcal{Z}} \frac{c_i^a !}{\prod_j
    z_{ij}^a!}\bigg(\frac{P_{ij}w_j}{\sum_j P_{ij}w_j}\bigg)^{c_i^a}   \\
  &\qquad\qquad\times
  \sum_{z_i^r\in\txcal{Z}} \frac{c_i^r !}{\prod_j
    z_{ij}^r!}\bigg(\frac{(1- P_{ij})w_j}{1-\sum_j
    P_{ij}w_j}\bigg)^{c_i^r}  \\
  &=
  \prod_i \frac{t_i!}{c_i^a ! c_i^r !}
  \bigg(\sum_j P_{ij}w_j\bigg)^{c_i^a}
  \bigg(1-\sum_j P_{ij}w_j\bigg)^{c_i^r}
\end{align*}
The assertion made by the lemma follows by considering the logarithm
of the last expression. 
\end{proof}

\begin{proof}[Proof of Lemma~\ref{lem:convexity}]
Throughout, let $e$ and $o$ be any two indices in $\{1, 2, \ldots,
v\}$.  For any sample $c$ and $i \in \{1$, $2$, $\ldots$, $s\}$ define
\[
  Q_i(w)
  =
  \frac{c_i^a}{(\sum_{j=1}^v P_{ij}w_j)^2} +
  \frac{c_i^r}{(1-\sum_{j=1}^v P_{ij}w_j)^2}.
\]
Straightforward computations yield
\[
  \frac{\partial^2}{\partial w_e^2} \ell(w)
  =
 -\sum_{i=1}^sP_{ie}^2 Q_i(w)
\]
and
\[
\frac{\partial^2}{\partial w_e\partial w_o} \ell(w)
  =
 -\sum_{i=1}^s P_{ie}P_{io} Q_i(w).
\]
Let $H$ be the Hessian of $\ell(w)$ with respect to $w$, namely the
$v \times v$ matrix with entries $H_{eo} = \partial^2
\ell(w)/(\partial w_e \partial w_o)$. For any $u \in \sS$ it follows
that
\begin{align*}
  u^tHu
  &=  \sum_{e,o} u_{e}H_{eo}u_{o}
   = - \sum_{e,o}
    u_{e}u_{o}\sum_{i=1}^sP_{ie}P_{io}Q_i(w)\\
  &= -\sum_{i=1}^sQ_i(w) \sum_{e,o}
    u_{e}P_{ie}P_{io}u_{o}
   = -\sum_{i=1}^sQ_i(w) \bigg(\sum_e P_{ie}u_e\bigg)^2.
\end{align*}
The fact that $Q_i(w)\geq 0$ for all $i$ and $w$ leads to $H$
being negative semi-definite, thus concluding the proof.
\end{proof}

\bibliographystyle{alpha}

\begin{thebibliography}{PQBC{\etalchar{+}}20}

\bibitem[ACA{\etalchar{+}}17]{pmid28337072}
E.~Afshinnekoo, C.~Chou, N.~Alexander, S.~Ahsanuddin, A.~N. Schuetz, and C.~E.
  Mason.
\newblock {{P}recision {M}etagenomics: {R}apid {M}etagenomic {A}nalyses for
  {I}nfectious {D}isease {D}iagnostics and {P}ublic {H}ealth {S}urveillance}.
\newblock {\em J Biomol Tech}, 28(1):40--45, 04 2017.

\bibitem[BOWI{\etalchar{+}}21]{pmid34323811}
I.~Bar-Or, M.~Weil, V.~Indenbaum, E.~Bucris, D.~Bar-Ilan, M.~Elul, N.~Levi,
  I.~Aguvaev, Z.~Cohen, R.~Shirazi, O.~Erster, A.~Sela-Brown, D.~Sofer, O.~Mor,
  E.~Mendelson, and N.~S. Zuckerman.
\newblock {{D}etection of {S}{A}{R}{S}-{C}o{V}-2 variants by genomic analysis
  of wastewater samples in {I}srael}.
\newblock {\em Sci Total Environ}, 789:148002, May 2021.

\bibitem[BPS{\etalchar{+}}21]{pmid34163035}
D.~L. Bugembe, M.~V.~T. Phan, I.~Ssewanyana, P.~Semanda, H.~Nansumba,
  B.~Dhaala, S.~Nabadda, {\'A}.~N. O'Toole, A.~Rambaut, P.~Kaleebu, and
  M.~Cotten.
\newblock {{E}mergence and spread of a {S}{A}{R}{S}-{C}o{V}-2 lineage {A}
  variant ({A}.23.1) with altered spike protein in {U}ganda}.
\newblock {\em Nat Microbiol}, 6(8):1094--1101, 08 2021.

\bibitem[CBH{\etalchar{+}}21]{pmid34324480}
A.~Christie, J.~T. Brooks, L.~A. Hicks, E.~K. Sauber-Schatz, J.~S. Yoder, and
  M.~A. Honein.
\newblock {{G}uidance for {I}mplementing {C}{O}{V}{I}{D}-19 {P}revention
  {S}trategies in the {C}ontext of {V}arying {C}ommunity {T}ransmission
  {L}evels and {V}accination {C}overage}.
\newblock {\em MMWR Morb Mortal Wkly Rep}, 70(30):1044--1047, Jul 2021.

\bibitem[D{\etalchar{+}}21]{Dankoetal}
D.~Danko et~al.
\newblock {{A} global metagenomic map of urban microbiomes and antimicrobial
  resistance}.
\newblock {\em Cell}, 184(13):3376--3393, Jun 2021.

\bibitem[DB16]{diamond2016}
Steven Diamond and Stephen Boyd.
\newblock {CVXPY}: {A} {P}ython-embedded modeling language for convex
  optimization.
\newblock {\em Journal of Machine Learning Research}, 17(83):1--5, 2016.

\bibitem[DBP{\etalchar{+}}11]{pmid21478889}
M.~A. DePristo, E.~Banks, R.~Poplin, K.~V. Garimella, J.~R. Maguire, C.~Hartl,
  A.~A. Philippakis, G.~del Angel, M.~A. Rivas, M.~Hanna, A.~McKenna, T.~J.
  Fennell, A.~M. Kernytsky, A.~Y. Sivachenko, K.~Cibulskis, S.~B. Gabriel,
  D.~Altshuler, and M.~J. Daly.
\newblock {{A} framework for variation discovery and genotyping using
  next-generation {D}{N}{A} sequencing data}.
\newblock {\em Nat Genet}, 43(5):491--498, May 2011.

\bibitem[ET93]{ET93}
Bradley Efron and Robert~J. Tibshirani.
\newblock {\em An introduction to the bootstrap}, volume~57 of {\em Monographs
  on Statistics and Applied Probability}.
\newblock Chapman and Hall, New York, 1993.

\bibitem[FKH{\etalchar{+}}21]{pmid33501452}
R.~S. Fontenele, S.~Kraberger, J.~Hadfield, E.~M. Driver, D.~Bowes, L.~A.
  Holland, T.~O.~C. Faleye, S.~Adhikari, R.~Kumar, R.~Inchausti, W.~K. Holmes,
  S.~Deitrick, P.~Brown, D.~Duty, T.~Smith, A.~Bhatnagar, R.~A. Yeager, R.~H.
  Holm, N.~Hoogesteijn~von Reitzenstein, E.~Wheeler, K.~Dixon, T.~Constantine,
  M.~A. Wilson, E.~S. Lim, X.~Jiang, R.~U. Halden, M.~Scotch, and A.~Varsani.
\newblock {{H}igh-throughput sequencing of {S}{A}{R}{S}-{C}o{V}-2 in wastewater
  provides insights into circulating variants}.
\newblock {\em medRxiv}, Jan 2021.

\bibitem[GDS{\etalchar{+}}18]{pmid29967506}
B.~Grüning, R.~Dale, A.~Sjödin, B.~A. Chapman, J.~Rowe, C.~H. Tomkins-Tinch,
  R.~Valieris, and J.~Köster.
\newblock {{B}ioconda: sustainable and comprehensive software distribution for
  the life sciences}.
\newblock {\em Nat Methods}, 15(7):475--476, 07 2018.

\bibitem[GL18]{pmid29129921}
J.~L. Gardy and N.~J. Loman.
\newblock {{T}owards a genomics-informed, real-time, global pathogen
  surveillance system}.
\newblock {\em Nat Rev Genet}, 19(1):9--20, Jan 2018.

\bibitem[HLZ{\etalchar{+}}21]{pmid33416890}
T.~Hu, J.~Li, H.~Zhou, C.~Li, E.~C. Holmes, and W.~Shi.
\newblock {{B}ioinformatics resources for {S}{A}{R}{S}-{C}o{V}-2 discovery and
  surveillance}.
\newblock {\em Brief Bioinform}, 22(2):631--641, 03 2021.

\bibitem[JDT{\etalchar{+}}21]{Jahn2021}
Katharina Jahn, David Dreifuss, Ivan Topolsky, Anina Kull, Pravin
  Ganesanandamoorthy, Xavier Fernandez-Cassi, Carola B{\"a}nziger, Alexander~J.
  Devaux, Elyse Stachler, Lea Caduff, Federica Cariti, Alex~Tu{\~n}as
  Corz{\'o}n, Lara Fuhrmann, Chaoran Chen, Kim~Philipp Jablonski, Sarah Nadeau,
  Mirjam Feldkamp, Christian Beisel, Catharine Aquino, Tanja Stadler, Christoph
  Ort, Tamar Kohn, Timothy~R. Julian, and Niko Beerenwinkel.
\newblock Detection and surveillance of sars-cov-2 genomic variants in
  wastewater.
\newblock {\em medRxiv}, 2021.

\bibitem[KR18]{pmid29788404}
J.~Köster and S.~Rahmann.
\newblock {{S}nakemake-a scalable bioinformatics workflow engine}.
\newblock {\em Bioinformatics}, 34(20):3600, 10 2018.

\bibitem[Li18]{pmid29750242}
H.~Li.
\newblock {{M}inimap2: pairwise alignment for nucleotide sequences}.
\newblock {\em Bioinformatics}, 34(18):3094--3100, 09 2018.

\bibitem[Max21]{pmid33893460}
A.~Maxmen.
\newblock {{O}ne million coronavirus sequences: popular genome site hits mega
  milestone}.
\newblock {\em Nature}, 593(7857):21, 05 2021.

\bibitem[MBHP20]{pmid33024908}
G.~Medema, F.~Been, L.~Heijnen, and S.~Petterson.
\newblock {{I}mplementation of environmental surveillance for
  {S}{A}{R}{S}-{C}o{V}-2 virus to support public health decisions:
  {O}pportunities and challenges}.
\newblock {\em Curr Opin Environ Sci Health}, 17:49--71, Oct 2020.

\bibitem[MHG21]{pmid33057582}
D.~Mercatelli, A.~N. Holding, and F.~M. Giorgi.
\newblock {{W}eb tools to fight pandemics: the {C}{O}{V}{I}{D}-19 experience}.
\newblock {\em Brief Bioinform}, 22(2):690--700, 03 2021.

\bibitem[PQBC{\etalchar{+}}20]{pmid32942178}
D.~Polo, M.~Quintela-Baluja, A.~Corbishley, D.~L. Jones, A.~C. Singer, D.~W.
  Graham, and J.~L. Romalde.
\newblock {{M}aking waves: {W}astewater-based epidemiology for
  {C}{O}{V}{I}{D}-19 - approaches and challenges for surveillance and
  prediction}.
\newblock {\em Water Res}, 186:116404, Nov 2020.

\bibitem[RAS{\etalchar{+}}21]{pmid34337584}
J.~D. Robishaw, S.~M. Alter, J.~J. Solano, R.~D. Shih, D.~L. DeMets, D.~G.
  Maki, and C.~H. Hennekens.
\newblock {{G}enomic surveillance to combat {C}{O}{V}{I}{D}-19: challenges and
  opportunities}.
\newblock {\em Lancet Microbe}, Jul 2021.

\bibitem[RHO{\etalchar{+}}20]{pmid32669681}
A.~Rambaut, E.~C. Holmes, {\'A}.~O'Toole, V.~Hill, J.~T. McCrone, C.~Ruis,
  L.~du~Plessis, and O.~G. Pybus.
\newblock {{A} dynamic nomenclature proposal for {S}{A}{R}{S}-{C}o{V}-2
  lineages to assist genomic epidemiology}.
\newblock {\em Nat Microbiol}, 5(11):1403--1407, 11 2020.

\bibitem[SAB{\etalchar{+}}21]{pmid34058182}
L.~C. Scott, A.~Aubee, L.~Babahaji, K.~Vigil, S.~Tims, and T.~G. Aw.
\newblock {{T}argeted wastewater surveillance of {S}{A}{R}{S}-{C}o{V}-2 on a
  university campus for {C}{O}{V}{I}{D}-19 outbreak detection and mitigation}.
\newblock {\em Environ Res}, 200:111374, May 2021.

\bibitem[SM17]{pmid28382917}
Y.~Shu and J.~McCauley.
\newblock {{G}{I}{S}{A}{I}{D}: {G}lobal initiative on sharing all influenza
  data - from vision to reality}.
\newblock {\em Euro Surveill}, 22(13), 03 2017.

\bibitem[TTH{\etalchar{+}}21]{pmid33972780}
Y.~Turakhia, B.~Thornlow, A.~S. Hinrichs, N.~De~Maio, L.~Gozashti, R.~Lanfear,
  D.~Haussler, and R.~Corbett-Detig.
\newblock {{U}ltrafast {S}ample placement on {E}xisting t{R}ees ({U}{S}h{E}{R})
  enables real-time phylogenetics for the {S}{A}{R}{S}-{C}o{V}-2 pandemic}.
\newblock {\em Nat Genet}, 53(6):809--816, 06 2021.

\bibitem[WZY{\etalchar{+}}20]{pmid32015508}
F.~Wu, S.~Zhao, B.~Yu, Y.~M. Chen, W.~Wang, Z.~G. Song, Y.~Hu, Z.~W. Tao, J.~H.
  Tian, Y.~Y. Pei, M.~L. Yuan, Y.~L. Zhang, F.~H. Dai, Y.~Liu, Q.~M. Wang,
  J.~J. Zheng, L.~Xu, E.~C. Holmes, and Y.~Z. Zhang.
\newblock {{A} new coronavirus associated with human respiratory disease in
  {C}hina}.
\newblock {\em Nature}, 579(7798):265--269, 03 2020.

\bibitem[ZSZ{\etalchar{+}}21]{pmid34395364}
M.~Zhu, J.~Shen, Q.~Zeng, J.~W. Tan, J.~Kleepbua, I.~Chew, J.~X. Law, S.~P.
  Chew, A.~Tangathajinda, N.~Latthitham, and L.~Li.
\newblock {{M}olecular {P}hylogenesis and {S}patiotemporal {S}pread of
  {S}{A}{R}{S}-{C}o{V}-2 in {S}outheast {A}sia}.
\newblock {\em Front Public Health}, 9:685315, 2021.

\end{thebibliography}

\newpage

\section*{Supplementary matterial}

\begin{figure}[h]
\centering
    \includegraphics[width=.9\linewidth]{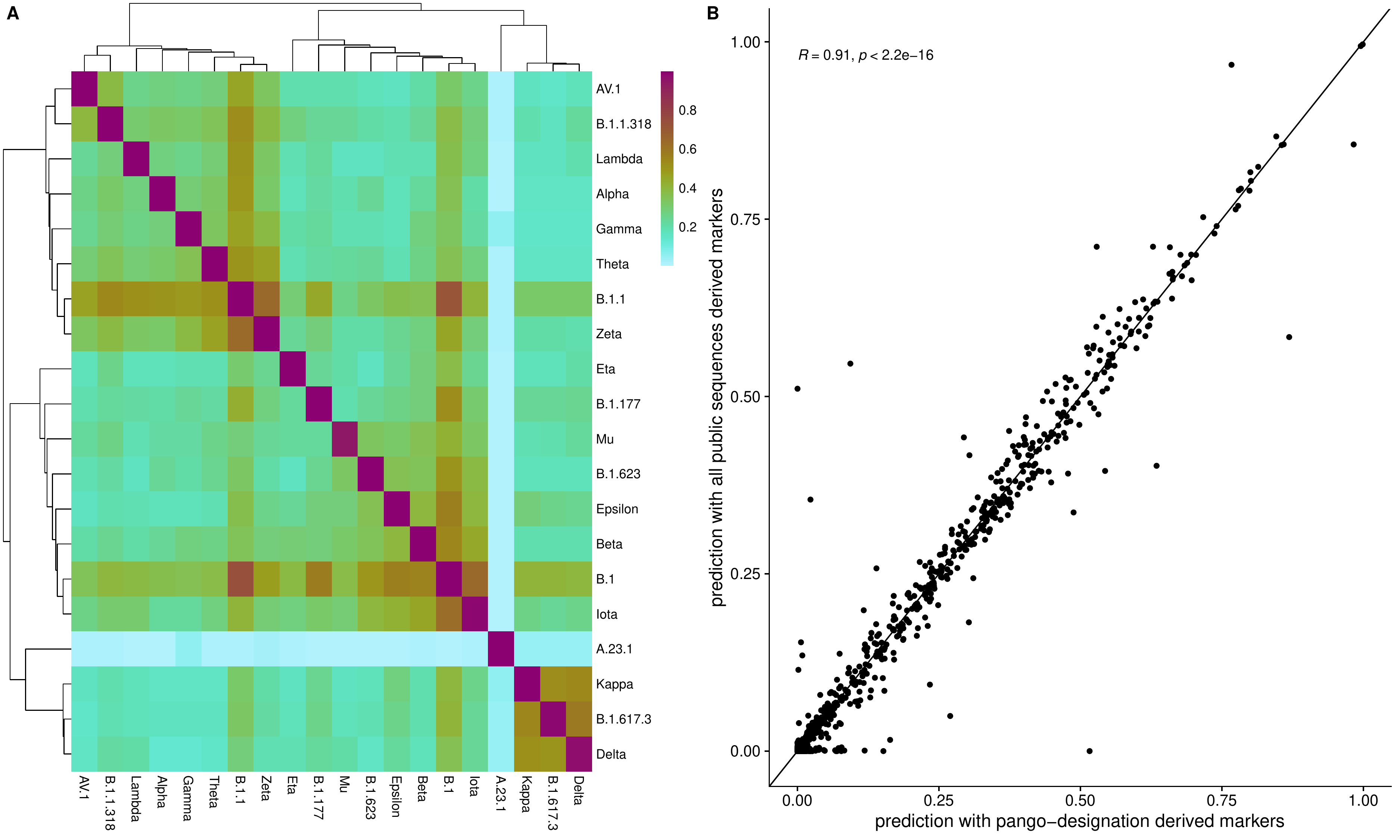}
    \caption[S1]{(A) Heatmap showing the similarities of polymorphism
      frequencies in SARS-CoV-2 variant groups, comparing
      pango-designation based markers (horizontal axis) and the
      equivalent markers from phylogenetic tree \cite{pmid33972780}
      (vertical axis). Similarities were calculated over polymorphisms
      present on both sets and measured by cosine similarity, which
      ranges between 0 and 1 (1 meaning a perfect match). (B) Scatter
      plot displaying model predictions for variant frequencies on
      wastewater samples, either obtained with pango-designation based
      markers (horizontal axis) or markers derived from all public
      sequences on {\bf UCSC} phylogenetic tree
      \footnote{\url{https://hgdownload.soe.ucsc.edu/goldenPath/wuhCor1/UShER_SARS-CoV-2//}}
      (vertical axis).} 
\label{suppFig1}
\end{figure}

\pagebreak

\begin{figure}
\centering
\includegraphics[width=0.95\textwidth]{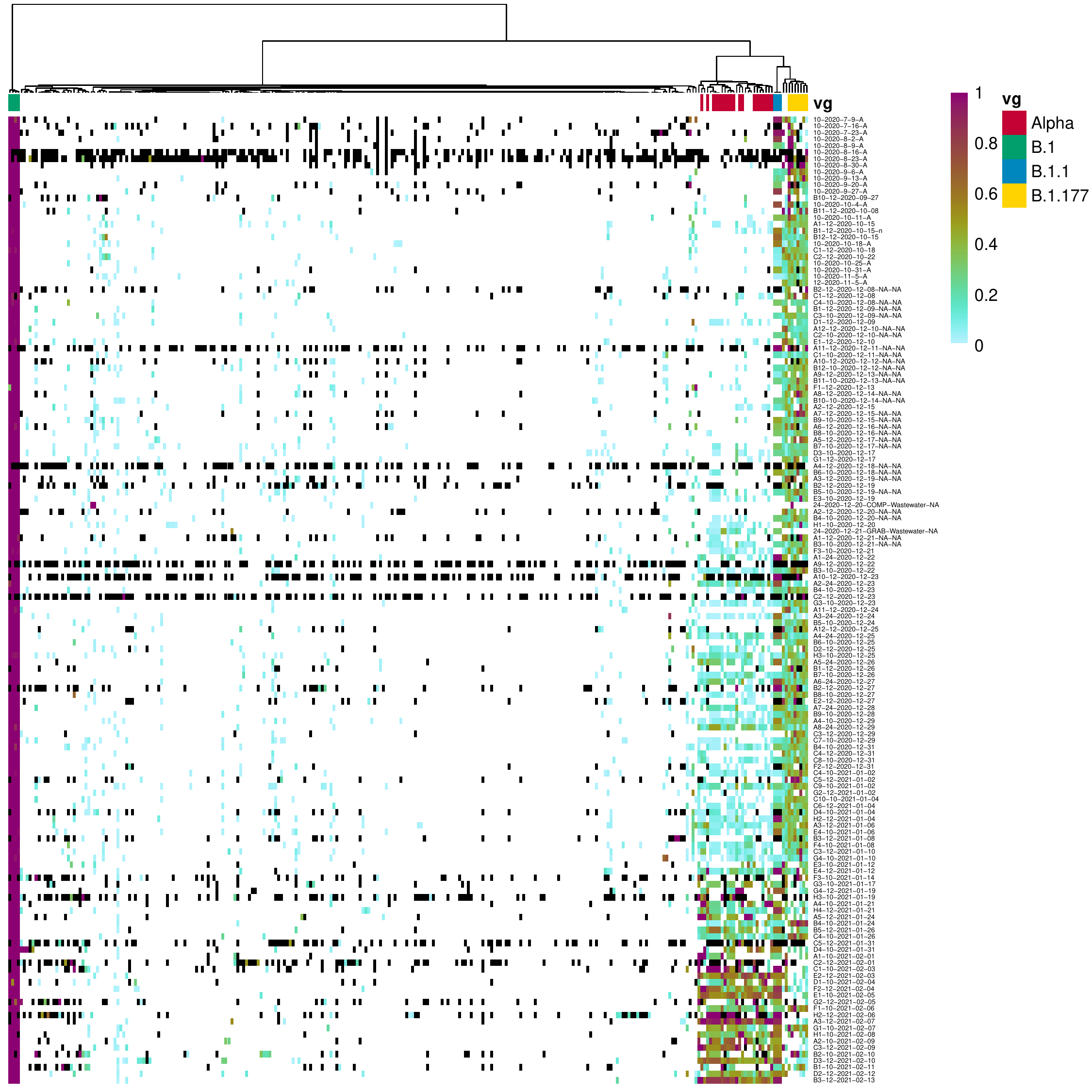}
\caption[S2]{Polymorphisms frequencies in all wastewater samples from
  the Swiss dataset, ordered row-wise by collection date, with the
  respective markers from the 4 major {\bf VG} annotated. Black cells
  denote no coverage in the respective sample. } 
\label{suppFig2}
\end{figure}

\pagebreak

\begin{figure}
\centering
 \includegraphics[width=0.85\textwidth]{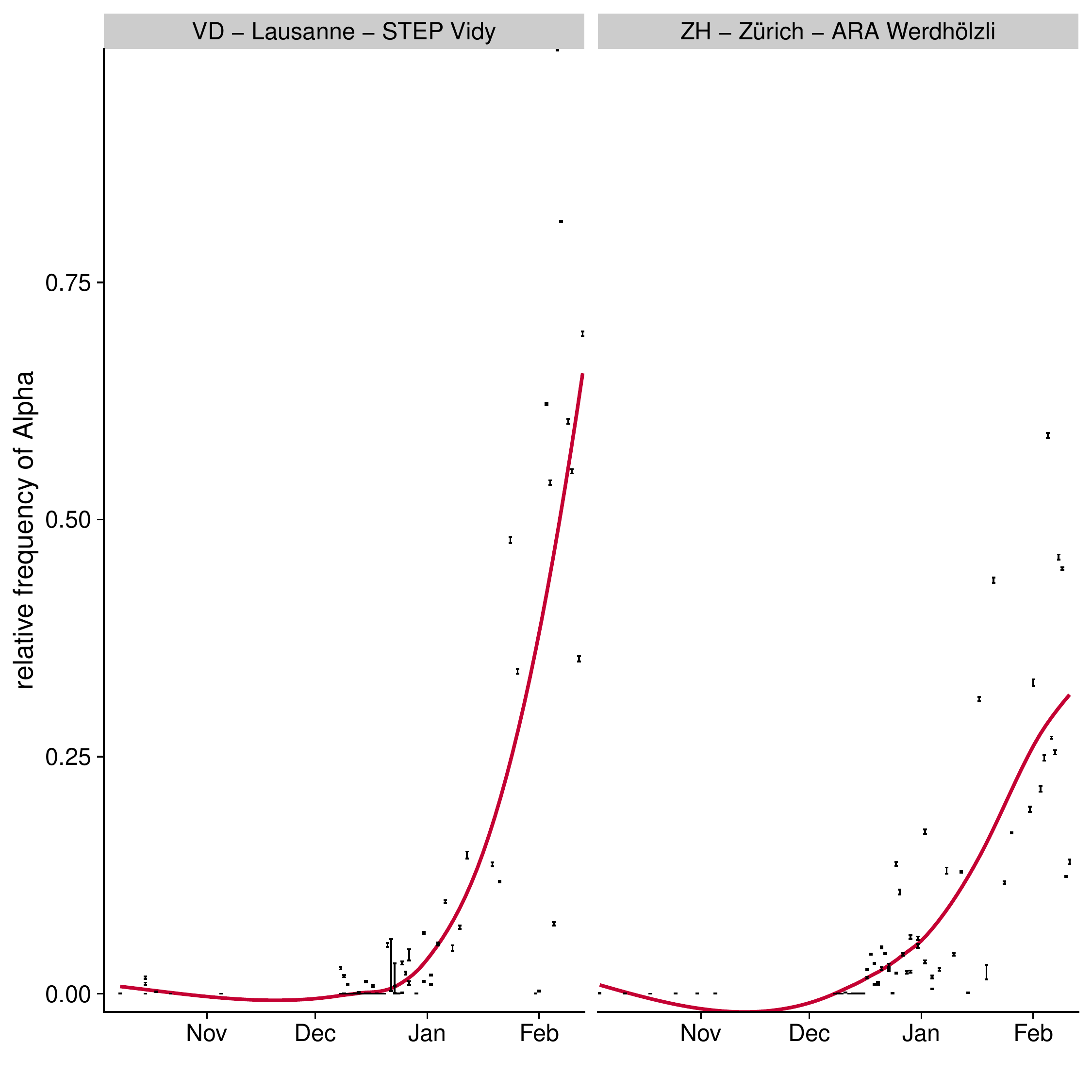}
\caption{Graphic based on wastewater sequencing data collected over
  time (122 samples) in the sewage treatment plants of Lausanne and
  Zürich. \texttt{LOESS} (locally estimated scatter plot smoothing) of
  the Alpha {\bf VG} abundance, with error bars denoting the bootstrap
  standard errors.} 
\label{suppFig3}
\end{figure}

\begin{figure}
\centering
\includegraphics[width=0.9\textwidth]{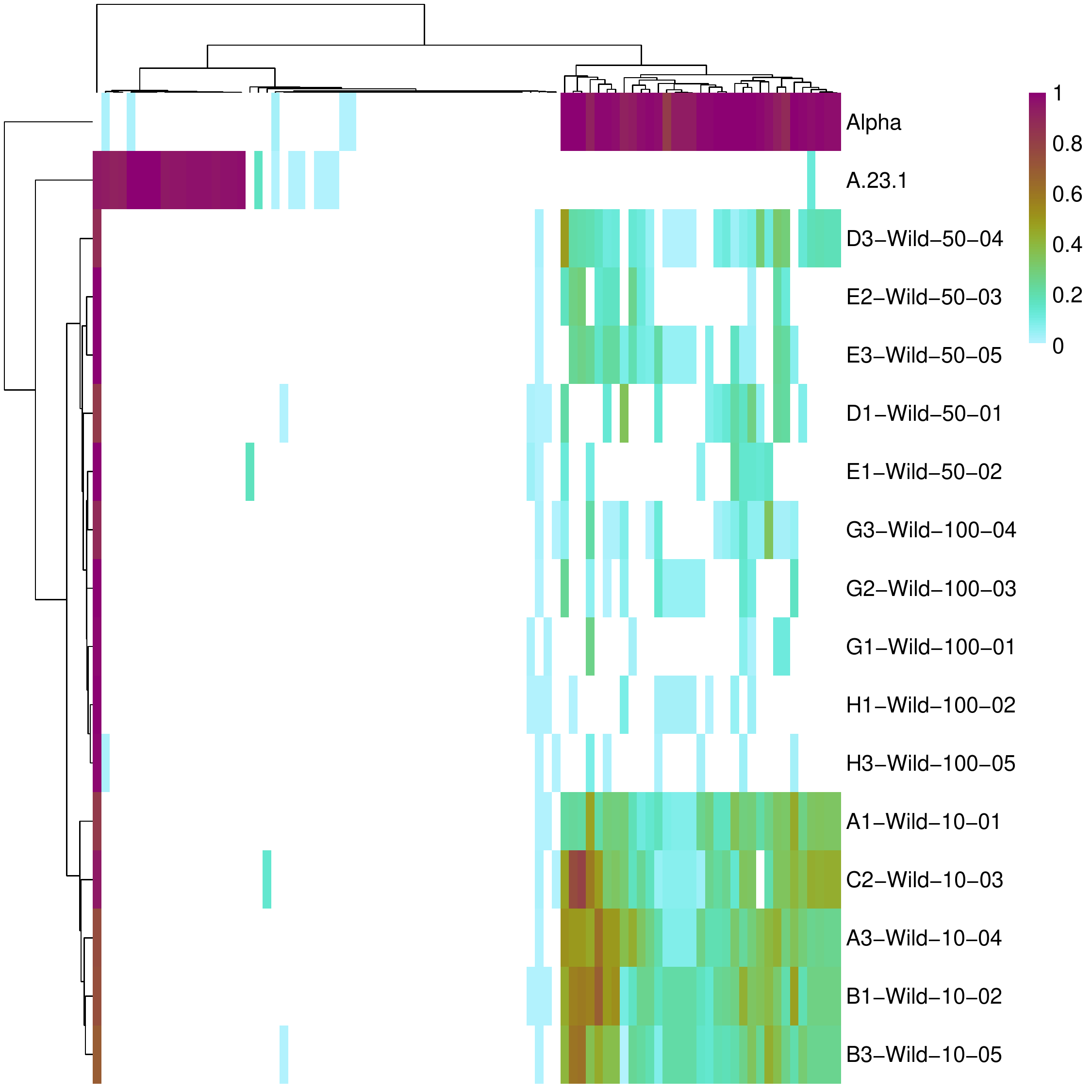}
\caption{Heatmap result of an unsupervised hierarchical clustering of
  polymorphism frequencies in the dilution samples. Alpha and A.23.1
  {\bf VG} markers were added for comparison.} 
\label{suppFig4}
\end{figure}

\begin{figure}
\centering
\includegraphics[width=0.9\textwidth]{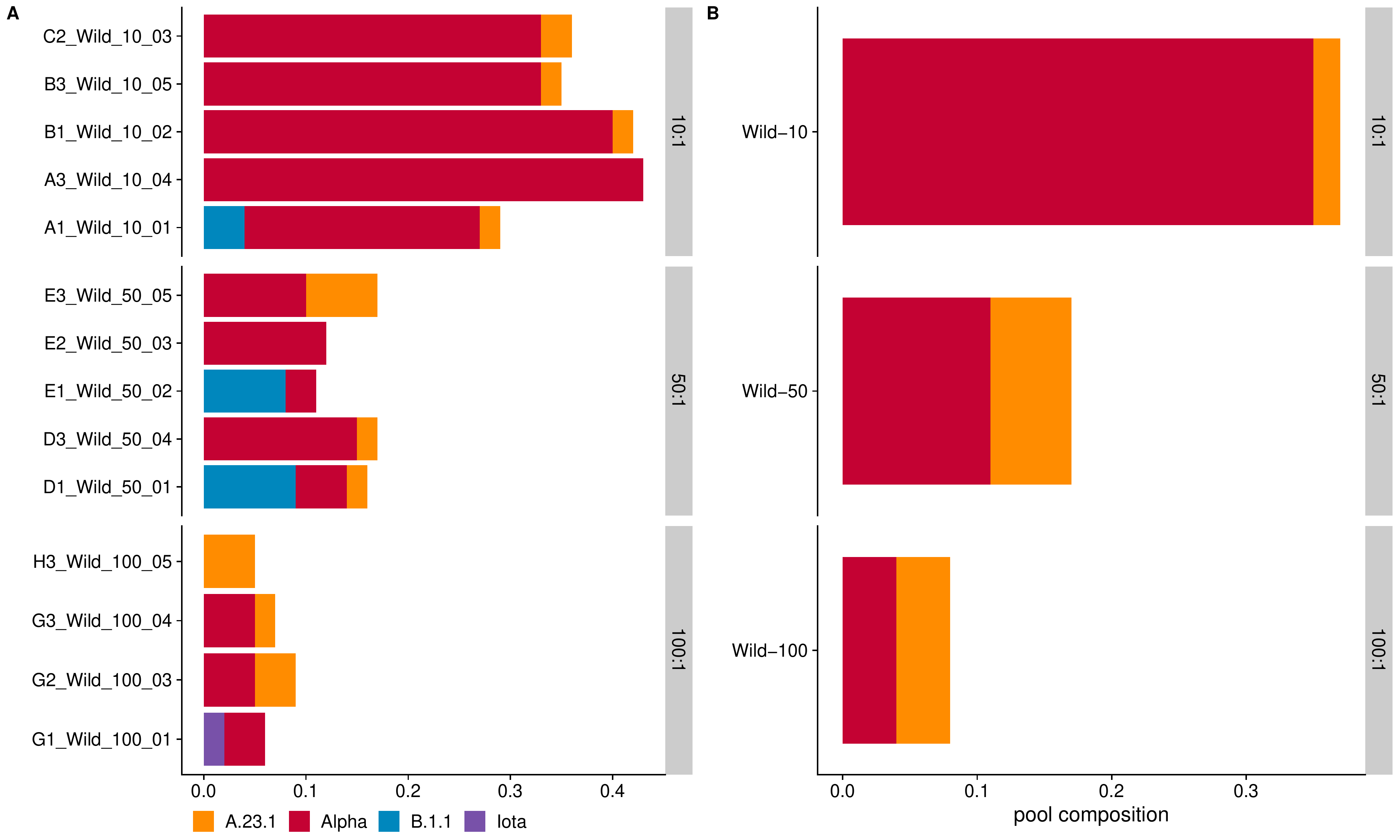}
\caption{(A) Model predictions for each of the dilution samples
  replicates. (B) Model predictions after merging the replicates into
  a single sample.} 
\label{suppFig5}
\end{figure}

\begin{table}[]
\centering
\scriptsize
\caption{Variant Groups}
\label{tab:table-vg}
\begin{tabular}{ll}
\hline
\rowcolor[HTML]{EFEFEF}
Variant Group & Pango Lineages \\ \hline
A.23.1 & A.23.1 \\
AV.1 & AV.1 \\
B.1.1.318 & B.1.1.318 \\
Alpha & B.1.1.7 \\
Beta & B.1.351, B.1.351.2, B.1.351.3 \\
Epsilon & B.1.427, B.1.429 \\
Eta & B.1.525 \\
Iota & B.1.526 \\
Kappa & B.1.617.1 \\
Delta & B.1.617.2, AY.1, AY.2, AY.3, AY.3.1 \\
B.1.617.3 & B.1.617.3 \\
Mu & B.1.621, B.1.621.1 \\
B.1.623 & B.1.623 \\
Lambda & C.37 \\
Gamma & P.1, P.1.1, P.1.2, P.1.4, P.1.6, P.1.7 \\
Zeta & P.2 \\
Theta & P.3 \\
B.1.177 & B.1.177 \\
B.1.1 & B.1.1 \\
B.1 & B.1 \\
\hline
\end{tabular}
\end{table}

\end{document}